\documentclass[aoas,preprint]{imsart}

\RequirePackage[OT1]{fontenc}
\RequirePackage{amsthm,amsmath}
\RequirePackage{natbib}
\RequirePackage[colorlinks,citecolor=blue,urlcolor=blue]{hyperref}

\RequirePackage{amsmath,amsthm,amssymb,bbm}
\RequirePackage{graphicx}
\RequirePackage{booktabs}


\startlocaldefs
\numberwithin{equation}{section}
\theoremstyle{plain}

\newtheorem{lemma}{Lemma}[section]
\endlocaldefs

\bibliographystyle{imsart-nameyear}

\let \hat \widehat

\begin{document}

\begin{frontmatter}
\title{Functional regression for quasar spectra\thanksref{T1}}
\runtitle{Functional regression for quasar spectra}
\thankstext{T1}{A six pages preliminary extended abstract version of this work was accepted among the conference proceedings of the Third International Workshop on Functional and Operatorial Statistics  (IWFOS2014, Stresa, Italy; June 19-21, 2014).}

\begin{aug}
\author{\fnms{Mattia} \snm{Ciollaro}\ead[label=e1]{ciollaro@cmu.edu}},
\author{\fnms{Jessi} \snm{Cisewski}\ead[label=e2]{cisewski@stat.cmu.edu}},
\author{\fnms{Peter E.} \snm{Freeman}\ead[label=e3]{pfreeman@stat.cmu.edu}},
\author{\fnms{Christopher R.} \snm{Genovese}\thanksref{s1,s2} \ead[label=e4]{genovese@stat.cmu.edu}},
\author{\fnms{Jing} \snm{Lei}\ead[label=e5]{jinglei@andrew.cmu.edu}},
\author{\fnms{Ross} \snm{O'Connell}\ead[label=e6]{rcoconne@andrew.cmu.edu}}
\and
\author{\fnms{Larry} \snm{Wasserman}\thanksref{s1}\ead[label=e7]{larry@stat.cmu.edu}}
\ead[label=u1,url]{http://www.stat.cmu.edu}
\ead[label=u2,url]{http://www.cmu.edu/physics}

\thankstext{s1}{Supported in part by National Science Foundation Grant NSF-DMS-1208354.}
\thankstext{s2}{Supported in part by Department of Energy Grant DE-FOA-0000918.}

\runauthor{M. Ciollaro et al.}

\affiliation{Carnegie Mellon University}

\address{M. Ciollaro\\
J. Cisewski\\
P. E. Freeman\\
C. R. Genovese\\
J. Lei\\
L. Wasserman\\
\text{}\\
Department of Statistics\\
Carnegie Mellon University\\
5000 Forbes Avenue\\
Pittsburgh (PA)\\
15213 USA\\
\printead{e1}\\
\phantom{E-mail:\ }\printead*{e2}\\
\phantom{E-mail:\ }\printead*{e3}\\
\phantom{E-mail:\ }\printead*{e4}\\
\phantom{E-mail:\ }\printead*{e5}\\
\phantom{E-mail:\ }\printead*{e7}\\
\printead{u1}
}

\address{R. O'Connell\\
\text{}\\
Department of Physics\\
Carnegie Mellon University\\
5000 Forbes Avenue\\
Pittsburgh (PA)\\
15213 USA\\
\printead{e6}\\
\printead{u2}
}
\end{aug}

\begin{abstract}
The {\em Lyman-$\alpha$ forest} is a portion of the observed light spectrum of distant galactic nuclei which allows us to probe remote regions of the Universe that are otherwise inaccessible. The observed Lyman-$\alpha$ forest of a quasar light spectrum can be modeled as a noisy realization of a smooth curve that is affected by a `damping effect' which occurs whenever the light emitted by the quasar travels through regions of the Universe with higher matter concentration. To decode the information conveyed by the Lyman-$\alpha$ forest about the matter distribution, we must be able to separate the smooth {\em continuum} from the noise and the contribution of the damping effect in the quasar light spectra. To predict the continuum in the Lyman-$\alpha$ forest, we use a nonparametric functional regression model in which both the response and the predictor variable (the smooth part of the damping-free portion of the spectrum) are function-valued random variables. We demonstrate that the proposed method accurately predicts the unobservable continuum in the Lyman-$\alpha$ forest both on simulated spectra and real spectra. Also, we introduce distribution-free prediction bands for the nonparametric functional regression model that have finite sample guarantees. These prediction bands, together with bootstrap-based confidence bands for the projection of the mean continuum on a fixed number of principal components, allow us to assess the degree of uncertainty in the model predictions.
\end{abstract}

\begin{keyword}
\kwd{Nonparametric functional regression}
\kwd{Functional data analysis}
\kwd{Prediction}
\kwd{Quasar spectra}
\kwd{Lyman-$\alpha$ forest}
\end{keyword}

\end{frontmatter}

\section{Introduction}

\begin{figure}[h!]
	\begin{center}
		\includegraphics[width=\columnwidth]{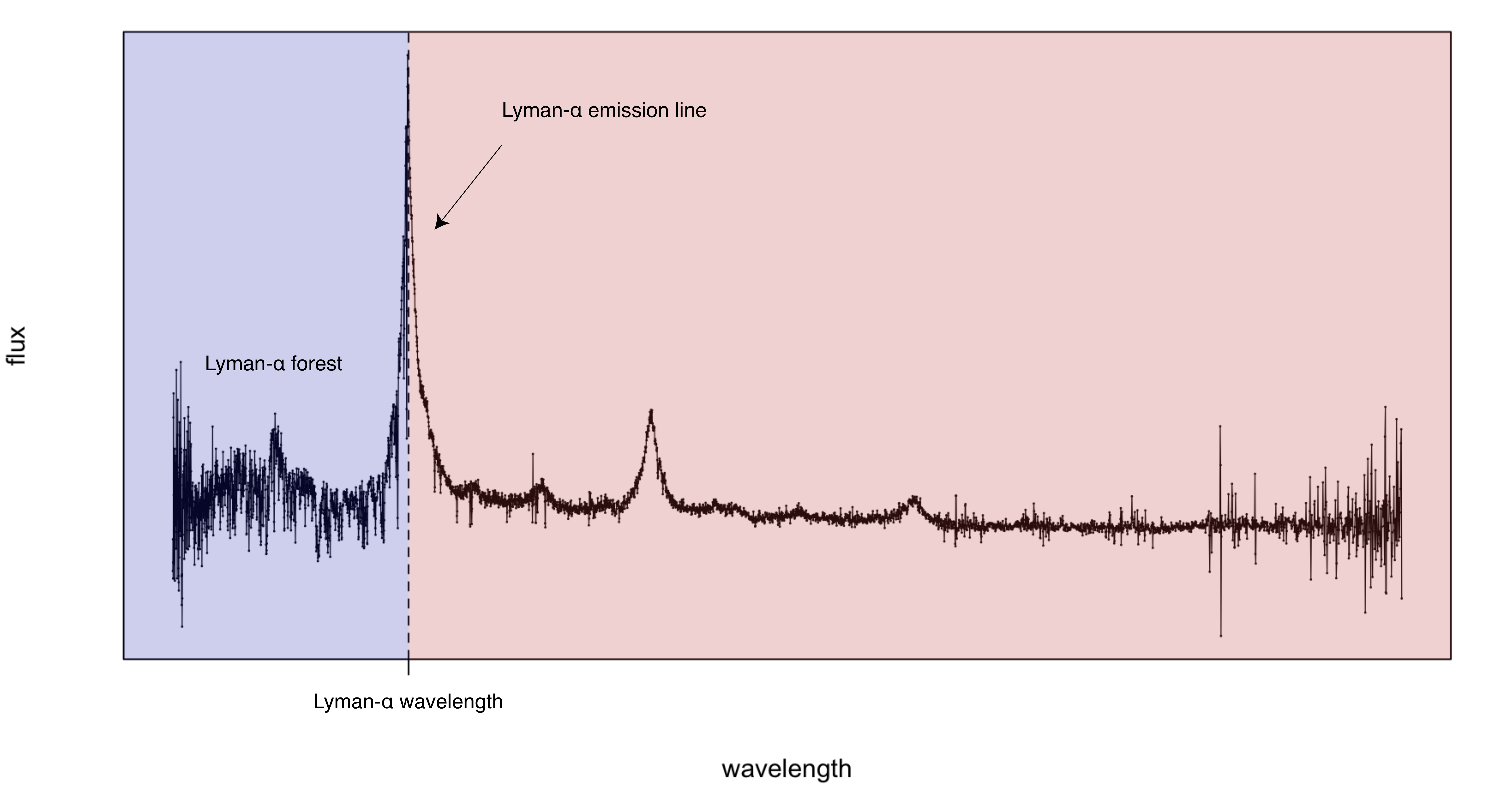}
		\caption{Light spectrum of a quasar. To the left of the Lyman-$\alpha$ line, the flux is damped at a dense set of wavelengths and the unabsorbed flux continuum is not clearly recognizable (blue side of the figure). To the right of the Lyman-$\alpha$ line, the flux continuum is easily guessed (red portion of the figure). The spectrum is perturbed by heteroskedastic noise with higher variance at the extremes of the observed wavelength range.}
		\label{fig:quasarhigh}
	\end{center}
\end{figure}

Technological advances over
the last quarter-century have allowed astronomers to collect
data of unprecedented richness and scope, gathered from
previously inaccessible regions of space.
By exploiting the information about the spatial distribution
of the neutral hydrogen atoms contained in these data,
cosmologists are placing ever-tighter constraints on
the fundamental parameters
governing the Universe's structure and evolution.
Today, quasar spectra are one of the only means for
probing the distribution of neutral hydrogen and they allow us
to investigate the Universe at distances that are far beyond the reach of other available methods.

Quasars are luminous distant galactic nuclei that appear as point-like sources of light in our sky.
Many properties of a quasar, and of the regions of the Universe through which its light passes on its way to us, can be inferred from its {\em light spectrum}.
This is a curve that relates the light's intensity, or {\em flux}, to its wavelength.
Figure~\ref{fig:quasarhigh} shows an example of a quasar light spectrum. For wavelengths greater than what is called the Lyman-$\alpha$ wavelength, the observed light spectrum $f_{\text{obs}}$ can be well modeled as a smooth {\em continuum} $f$ plus noise:

\begin{equation*}
	f_{\text{obs}}(\lambda) = f(\lambda) + \text{noise}(\lambda).
\end{equation*}

\noindent For wavelengths below the Lyman-$\alpha$ wavelength -- a region of the spectrum known as the {\em Lyman-$\alpha$ forest} -- a third contribution comes from a nonsmooth random absorption effect that originates whenever the light emitted by the quasar travels through regions of the Universe that are richer in neutral hydrogen:

\begin{equation}
	\label{eq:absorptionmodeleasy}
		f_{\text{obs}}(\lambda) = \text{absorption}(\lambda) \cdot f(\lambda) + \text{noise}(\lambda).
\end{equation}

We want to decode the information in the Lyman-$\alpha$ forest to learn about the distribution of the neutral hydrogen in otherwise unreachable regions of the Universe. To do this, however, we need to separate the smooth part of the quasar spectrum, which we call the {\em unabsorbed flux continuum} (UFC), from the noise and (especially) from the absorption component. The task of separating the three components is not an easy one because there is no immediately available information about the UFC in the Lyman-$\alpha$ forest (which we label UFC$\alpha$). However, the smooth part of the spectrum above the Lyman-$\alpha$ wavelength, henceforth abbreviated with UFC+, could be informative about the structure of UFC$\alpha$. In fact, these two curves belong to the same object, and one may be used to predict the other. For a particular spectrum, let $X$ represent UFC+ (where the quasar spectrum is absorption-free) and let $Y$ denote UFC$\alpha$. We develop a nonparametric functional regression model to estimate the regression operator

\begin{equation}
	\label{eq:firstregressionoperator}
		r(X)(\cdot) = E(Y | X )(\cdot),
\end{equation}

\noindent where

\begin{equation*}
		\lambda \mapsto r(X)(\lambda)
\end{equation*}

\noindent is a curve that has the same domain as $Y$. Then, we use the estimate $\hat r$ to predict $Y^*$ of a new quasar spectrum from $X^*$ according to

\begin{equation*}
	\hat Y^*(\lambda) = \hat r(X^*)(\lambda).
\end{equation*}

Reliable predictions of UFC$\alpha$ have important implications in cosmological studies. The detection of Baryon Acoustic Oscillations (BAOs) is only one example of the many goals which require good quality predictions of the UFC. BAOs are `typical' distances that characterize the geometry of our Universe. They provide a valuable ruler to measure the separation between pairs of distant objects and to study large scale structures. Our ability to detect BAOs using Lyman-$\alpha$ forest data relies on the accuracy of the predicted UFC$\alpha$. Thus, while on the one hand the prediction of UFC$\alpha$ has a high scientific payoff, on the other hand it also constitutes a challenging statistical task for three main reasons. First, the target quasar spectra are those of {\em high redshift} (i.e. very distant) quasars. Because of the strength of the absorption effect
in the Lyman-$\alpha$ forest of their spectra, the UFC is indiscernible in the observed spectrum of high redshift quasars. One then necessarily has to use low redshift quasar spectra (in which the absorption effect is absent or mild) to fit the regression model, and the goodness of the predictions is then conditional, at least in part, on the redshift invariance of the regression operator $r$. Secondly, the amplitude of the spectra in the Lyman-$\alpha$ forest is characterized by a high degree of variability. For some spectra, the variability in the amplitude represents a natural limit in the ability of statistical models to predict UFC$\alpha$. Finally, there are no natural candidate predictors for UFC$\alpha$ other than the absorption-free part of the spectrum, UFC+.
Despite these challenges, we demonstrate that nonparametric functional regression provides a natural framework for this prediction problem and that the proposed nonparametric functional regression model produces satisfying predictions of UFC$\alpha$.

Our work also introduces distribution-free prediction bands for the nonparametric functional regression model with a function-valued predictor and a function-valued response. Prediction bands for nonparametric functional models are not yet well developed in the Functional Data Analysis literature: we contribute by providing a straightforward method to obtain prediction bands with finite sample coverage guarantees. We use these bands to assess the uncertainty of the model predictions in the Lyman-$\alpha$ continuum prediction problem. Thus, by combining some recent advances in nonparametric functional regression modeling with distribution-free prediction bands, we provide a complete and statistically sound methodology for the prediction of UFC$\alpha$ and the assessment of the uncertainty in the predictions.

The remainder of the paper is organized as follows. The next section provides more scientific details and background on quasars, the Lyman-$\alpha$ forest, and the motivation behind the need of statistical methods to predict UFC$\alpha$. Section \ref{sec:nonparametricfunctionalmodeling} presents the nonparametric estimator of the regression operator $r$ of equation \eqref{eq:firstregressionoperator}. In Section \ref{sec:conformalsets}, we derive distribution-free prediction bands for the UFC while Section \ref{sec:confidencebands} describes the implementation of a {\em wild bootstrap} procedure which can be used to generate pseudo-confidence bands for the mean of UFC$\alpha$. The prediction bands and the confidence bands allow us to visually evaluate the degree of uncertainty of the model predictions. Sections \ref{sec:simulationstudy} and \ref{sec:realspectraapplication} are devoted to describing the application of the nonparametric functional regression model to a set of realistic simulated spectra and to two sets of real quasar spectra. Finally, in Section \ref{sec:conclusions} we summarize the results of our study and give some perspective on future research directions stemming from this analysis.

\section{Quasars and the Lyman-$\alpha$ forest}
\label{sec:quasarsandlyaforest}
Shortly after the Big Bang, the Universe 
was sufficiently hot that hydrogen gas existed in an ionized 
state, meaning that the negatively charged electrons within the gas 
were not bound to positively charged 
protons.
After 400,000 years, the
Universe had cooled sufficiently that electrons and protons recombined
and hydrogen became neutral. But over the next billion years, as stars and 
galaxies formed and bathed the intergalactic medium (IGM) in ultraviolet (UV)
radiation, the Universe became reionized. Reionization was not
instantaneous; within a volume of space, one can envision ionized regions
as initially small bubbles around the hottest stars that over time expanded 
and merged, with regions of neutral gas steadily shrinking
and eventually evaporating.

As we look further and further from the Earth, we look deeper and deeper
into the past, and with sufficiently powerful telescopes we can
probe remote regions of the Universe that contain 
neutral hydrogen.
Neutral hydrogen overdensities are far too faint for us to directly observe, but
we may infer their presence when they are backlit by distant luminous objects 
such as quasars. A quasar is the central core, or nucleus, of a galaxy that becomes
transiently ultraluminous when matter falls into a supermassive black hole
at a sufficiently high rate. As
matter falls, it converts up to 
40\% of its mass to photons that stream away from the black hole; the 
conversion of mass to radiant energy is so efficient that
a quasar's luminosity can dwarf that of the surrounding galaxy.\footnote{
Hence the name quasar, or quasi-stellar object:
the first observed quasars appeared on photographic plates
as point-like objects that were
indistinguishable from the stars of the Milky Way.}

Photons interact with neutral hydrogen at specific wavelengths which are typically
measured in \textit{\aa ngstr\"{o}ms}, denoted \AA\, (1 \AA ~$= 10^{-10}$ m).
For instance, a UV photon with a wavelength of 1216 \AA\, can cause an electron
to gain energy and jump to an excited state (more precisely, the electron ascends
from the first to the second orbital);
since energy must be conserved, the photon is absorbed. 
This specific interaction leads to the formation of
the so-called {\em Lyman-$\alpha$ absorption line}, a reduction in
the number of photons with wavelength 1216 \AA
relative to the number at other nearby wavelengths.
To detect regions containing neutral hydrogen, we rely on the fact that
as UV photons stream from quasars to the Earth, they are {\em redshifted}:
the Universe continuously expands over time, with a commensurate increase
in the wavelengths of the photons. A photon's redshift is defined as
\begin{eqnarray}
z = \frac{\lambda_{\rm obs}}{\lambda_{\rm emit}} - 1 \,, \nonumber
\end{eqnarray}
where $\lambda_{\rm emit}$ and $\lambda_{\rm obs}$
are the wavelengths when the photon is emitted and observed, respectively.
Cosmologists use redshift as a directly observable proxy for distance: the
farther a photon travels, the higher the redshift. Thus, regions
that are richer of neutral hydrogen along a quasar's line of sight
lying at different distances/redshifts from the quasar, absorb photons 
that have $\lambda_{\rm obs}$ = 1216 \AA, but
different values of $\lambda_{\rm emit}$.
In simpler terms, the light spectrum of a quasar is shifted towards higher wavelengths as a result of the Universe's expansion (with spectra of more distant quasars being
subject to a larger shift) and the dips corresponding to the absorption of light by neutral hydrogen are thus spread over a broad range of wavelengths rather than being coincident at the 1216 \AA\, wavelength.
The resulting series of dips in
the quasar's spectrum is dubbed the Lyman-$\alpha$ forest
(see Figure~\ref{fig:quasarhigh} and, e.g., \citealp{rauch1998annualreview} and \citealp{rauch1998lymanalphaencyclopedia}).
By convention, cosmologists concentrate on that part of the Lyman-$\alpha$ forest
extending downwards from 1216 \AA~to
1026 \AA, the wavelength of the Lyman-$\beta$ transition (where electrons
ascend from hydrogen's first orbital to its third orbital).

The absorption component of equation \eqref{eq:absorptionmodeleasy} can be expressed as $e^{-\tau(\lambda_{\rm emit})}$, where 
$\tau(\lambda_{\rm emit}) \geq 0$ is the {\em optical depth} (essentially, the local density) of the neutral hydrogen 
at the quasar rest-frame wavelength $\lambda_{\rm emit}$. This gives a more specific version of equation
\eqref{eq:absorptionmodeleasy},

\begin{equation}
	\label{eq:fluxmodel}
		f_{\rm obs}(\lambda_{\rm emit}) = e^{-\tau (\lambda_{\rm emit})} f(\lambda_{\rm emit}) + \epsilon (\lambda_{\rm emit}) \,,
\end{equation}
where $\epsilon$ is a zero-mean noise process (capturing, for instance, instrumental noise) and $f$ is
the smooth, but unobserved, quasar {\em continuum}. From equation \eqref{eq:fluxmodel}, it follows that
a larger $\tau(\lambda_{\rm emit})$ implies a deeper dip at wavelength $\lambda_{\rm emit}$
in the Lyman-$\alpha$ forest.
However, because the Universe is locally ionized (i.e.
neutral hydrogen is observable only if one considers sufficiently large neighborhoods of an arbitrary location in
the Universe), $\tau \approx 0$ in the Lyman-$\alpha$ forest portion of low redshift spectra. To simplify notation,
we henceforth drop the subscript from $\lambda_{\text{emit}}$.

Nonparametric smoothing of 
Lyman-$\alpha$ forest data allows direct estimates of
the product $e^{-\tau(\lambda)}f(\lambda)$ of equation \eqref{eq:fluxmodel}, but this product is by itself uninteresting;
analysts ultimately wish to estimate $\tau$ (or some transformation of it) because $\tau$ is the parameter
that links the absorption in the Lyman-$\alpha$ forest to the distribution of neutral hydrogen. In fact, from a scientific standpoint, the
smooth UFC $f$ can be considered a nuisance parameter.
The task of estimating the optical depth $\tau$ has led to the development
of algorithms by which one may estimate $f$ (and thus $\tau$)
in the Lyman-$\alpha$ forest regime
using the unabsorbed spectral flux
at wavelengths above $\approx$ 1300 \AA, based on principal
components analysis (see, e.g., \citealp{suzuki2005continuum}, \citealp{paris2011principal}, and \citealp{lee2012mean}).

The estimation of the relative flux absorption 

\begin{equation}
	\label{eq:relativefluxabsorption}
		\delta(\lambda)=\frac{f_{\text{obs}}(\lambda)-f(\lambda)}{f(\lambda)} \approx e^{-\tau(\lambda)}-1\,
\end{equation}

\noindent is a case where good predictions of the UFC $f$ are crucial. The correlation function of $\delta$ across different lines of sight can be used to detect BAOs, which manifest as periodic patterns in the separation between overdense regions of our Universe. However, to estimate the correlation function of the relative flux absorption $\delta$ for high redshift quasars across different lines of sight, one needs a good prediction of $f$, which is unobservable because of the absorption effect in the Lyman-$\alpha$ forest caused by intervening neutral hydrogen. The reader can refer to \cite{eisenstein2005darkmauscript} and \cite{eisenstein2005dark} for an introduction to BAOs, \cite{refId0} and \cite{slosar2011lyman} for examples of BAOs detection using the correlation function of $\delta$ and the correlation function of the galaxy distribution using data from the Baryon Oscillation Spectroscopic Survey (BOSS), and \cite{dawson2013baryon} for a description of the BOSS survey. 

Quasar spectra are high dimensional objects. For instance,
\cite{suzuki2005continuum} observe that, for low redshift quasars, about ten eigenspectra
are needed to capture all of the physically relevant features of their spectra. Because of the high dimensional and complex nature of Lyman-$\alpha$ forest data,
the study of Lyman-$\alpha$ forest is challenging and yet very appealing, especially for nonparametric statistical modeling.
\cite{cisewskiLyaIGM} use nonparametric methods to demonstrate how
Lyman-$\alpha$ forest data can be used to construct a 3D map of the
high redshift IGM that is otherwise not accessible.

In the following sections, we demonstrate that the problem of predicting UFC$\alpha$ (i.e. the continuum $f$ in the Lyman-$\alpha$ forest) can be addressed by estimating in a nonparametric fashion the conditional expectation $r(X)=E(Y|X)$ of a pair of function-valued random variables $(X,Y)$ from a sample of $n$ independent pairs $(X_i,Y_i)_{i=1}^n$ distributed as $(X,Y)$, and then by using the estimated regression operator $\hat r$ to predict $Y$ from $X$. We apply this procedure on a set of realistic mock spectra and on two sets of real spectra, and we find that one can use $\hat r$ to obtain accurate predictions of UFC$\alpha$. Our predictions are complemented with pseudo-confidence bands for the mean of UFC$\alpha$ and prediction bands with  finite sample guarantees. The prediction bands allow us to visually assess the variability (especially in terms of amplitude) of UFC$\alpha$. Our contribution is, to the best of our knowledge, the first application of modern Functional Data Analysis techniques to the analysis of Lyman-$\alpha$ forest data.

\section{Nonparametric functional modeling of quasar spectra}
\label{sec:nonparametricfunctionalmodeling}
We build on the work of \cite{lee2012mean}, \cite{paris2011principal}, and \cite{suzuki2005continuum}, and we use the $\lambda \geq 1300 $ \AA\, part of the absorption-free portion of the quasar spectrum UFC+ to predict UFC$\alpha$ in the 1050-1185 \AA\, range of the Lyman-$\alpha$ forest. In particular, we estimate the conditional expectation

\begin{equation*}
		r(X)(\lambda) = E(Y|X)(\lambda)\, \text{ for } \underline{\lambda} \leq \lambda \leq \overline{\lambda}
\end{equation*}

\noindent where, using the same notation of Section \ref{sec:quasarsandlyaforest}, $X$ and $Y$ are smooth function-valued random variables here representing, respectively, UFC+ in the $\lambda \geq 1300$ \AA\, wavelength region and UFC$\alpha$ in the $[\underline{\lambda},\overline{\lambda}] = [1050,1185]\, \text{\AA}$ wavelength region. In this setting, $X$ is a function-valued predictor for the function-valued response $Y$, and $r$ is an operator that maps functions into functions. More precisely, we have

\begin{equation*}
	r: X \mapsto r(X)
\end{equation*}

\noindent and for a fixed predictor $X$

\begin{equation*}
	r(X): \lambda \mapsto r(X)(\lambda),
\end{equation*}

\noindent meaning that $r(X)$ is a smooth curve that has the same domain of the response, $Y$.

Suppose that we observe a random sample of $n$ absorption-free and noiseless spectra with $\tau=0$ in the forest (recall from the previous section that $\tau \approx 0$ for low redshift spectra) and the corresponding i.i.d. pairs $\left(X_i, Y_i \right)_{i=1}^n$; then, for a new pair $(X,Y)$ for which $Y$ is not observable due to absorption (this is the case for high redshift spectra of ongoing sky surveys such as BOSS), we can predict $Y$ by means of the nonparametric functional regression estimator introduced by \cite{ferraty2012regression}

\begin{equation}
	\label{eq:model}
		\hat Y(\lambda)=\hat r(X)(\lambda)=\frac{\sum_{i=1}^n K\left(\frac{d(X_i,X)}{h}\right) \, Y_i(\lambda)}{\sum_{i=1}^n K\left(\frac{d(X_i,X)}{h}\right)}
\end{equation}

\noindent for $\underline{\lambda} \leq \lambda \leq \overline{\lambda}$.
Here, $d$ is a semimetric on the space of the predictor functions, the bandwidth $h$ is a $k$-nearest-neighbors smoothing parameter and $K$ is an asymmetric kernel function with compact support (in particular, we take $K$ to be a quadratic kernel supported in $[0,1]$). The bandwidth $h=h(\kappa,X)$ is a function of the number of nearest neighbors $\kappa$ and of the predictor $X$ which determines the width of the kernel $K$ at $X$. The number of nearest neighbors $\kappa$ is such that the number of sample predictor curves that satisfy $d(X_i,X) \leq h(\kappa,X)$ is exactly equal to $\kappa$. The role of the bandwidth parameter $h$ is to control the size of the neighborhood of $X$ that is used for the estimation of $r(X)$: choosing a kernel that is supported on $[0,1]$ implies that only those $Y_i$'s whose associated $X_i$ satisfies $d(X_i,X) \leq h(\kappa,X)$ contribute to equation \eqref{eq:model}. If $\kappa$ (and hence $h$) is too large, $\hat r(X)$ has high bias, whereas if $\kappa$ (and hence $h$) is too small $\hat r(X)$ has large variance. In practice, $\kappa$ is selected by cross-validation in the attempt of optimally balancing the bias and the variance of $\hat r$.

The function-valued pairs $(X_i,Y_i)_{i=1}^n$ used to fit the model and the new predictors $X$ are obtained by smoothing the appropriate portions of the observed noisy spectra.
We remark that, from a scientific standpoint, fitting the model on low redshift quasar spectra and then performing the predictions on high redshift quasar spectra entails the implicit assumption that the unknown conditional expectation operator $r$ is redshift invariant.

Notice that once the sample $(X_i,Y_i)_{i=1}^n$ is fixed and the optimal smoothing parameter $\kappa$ is chosen, the prediction of UFC$\alpha$ for new spectra is very fast to compute: only the kernel weights need to be updated in equation \eqref{eq:model} as $X$ varies.

\section{Conformal prediction bands}
\label{sec:conformalsets}
Equation \eqref{eq:model} provides a point prediction corresponding to the estimated conditional expectation of the function-valued random variable $Y$ when the corresponding function-valued predictor is $X$. In a predictive setting, besides a point prediction, it is also useful to have some measure of the uncertainty associated to the prediction rule. The uncertainty in the prediction necessarily comprises both the variability due to the estimation of the regression operator $r$ and the intrinsic variability of the response variable.

We complement the prediction rule \eqref{eq:model} with distribution-free prediction regions for the function-valued response $Y$ that have guaranteed finite sample coverage for any level $\alpha \in (0,1)$.
We use the \textit{conformal prediction} method of \cite{vovk2009line} and, in particular, we adapt the \textit{Inductive Conformal Predictor} algorithm to fit the nonparametric functional regression framework.

For a new pair of function values random variables $(X,Y) \sim P$, independent of the i.i.d. sample $(X_i,Y_i)_{i=1}^n \sim P$, we say that the random set $C_n(x)=C(X_1,\dots,X_n,x) \subseteq \mathcal{Y}$ is a \textit{marginally valid} $1-\alpha$ conformal prediction set if

\begin{equation*}
	\mathbb{P}\left(Y \in C_n(X) \right) \geq 1-\alpha 
\end{equation*}

\noindent for any $n \geq 1$, for any $\alpha \in (0,1)$ and for any $P$ where $\mathbb{P}=P^{(n+1)}=P \times P \times \dots \times P$ denotes the $(n+1)$-fold product probability measure based on $P$.

Let $\mathcal{X}$ and $\mathcal{Y}$ denote the function spaces to which the $X$'s and the $Y$'s belong, respectively. In the following, we denote $L=[\underline{\lambda},\overline{\lambda}]$, we assume that any $Y \in \mathcal{Y}$ is such that $P_Y(\sup_{\lambda \in L} |Y(\lambda)| < \infty)=1$, and we fix $\| y \|_{\mathcal{Y}}=\sup_{\lambda \in L} |y(\lambda)|$, which ensures that the conformal prediction region for $Y$ which we now described is band-shaped. Next, we consider the \textit{conformity score}

\begin{equation*}
	g(x,y)=-\| y-\hat r(x) \|_\mathcal{Y}=-\sup_{\lambda \in L} \left| y(\lambda) - \hat r(x)(\lambda) \right|\,,
\end{equation*}

\noindent which measures the conformity of the pair $(x,y)$ with the observed sample $(X_i,Y_i)_{i=1}^n$, i.e. how similar is $(x,y)$ with the sample. Consider now the following algorithm.\\

\fbox{\parbox{\textwidth}{
\begin{center}
	\textbf{Functional inductive conformal predictor}\\
\end{center}

\noindent \textbf{Input:} sample $\mathcal{S}=\{(X_1,Y_1), \dots, (X_n,Y_n)\}$, $X$; coverage level $1-\alpha$\\
\textbf{Output:} $C_n$, a level $1-\alpha$ marginally valid band for $Y$

\begin{enumerate}
	\item set $n_1=\lfloor \frac{n}{2} \rfloor$
	\item sample without replacement $n_1$ pairs from $\mathcal{S}$ to define a subsample $\mathcal{S}_1$
	\item define $\mathcal{S}_2=\mathcal{S} \setminus \mathcal{S}_1$
	\item estimate $\hat r_1$ on $\mathcal{S}_1$ and let $g_1(x,y)=-\sup_{\lambda \in L} |y(\lambda)-\hat r_1(x)(\lambda)|$
	\item compute the $\frac{\lfloor (n_2+1)\alpha\rfloor}{n_2+1}$ quantile $q$ of $g_1$ on $\mathcal{S}_2$
	\item output $C_n(X)=\{ y \in \mathcal{Y} : g_1(X,y) \geq q\}$\,.
\end{enumerate}
}}
\text{}\\

\noindent The intuition behind the algorithm is that, under the null hypothesis that $(X,Y)$ is distributed as the $(X_i,Y_i)$'s, the ranks of 
the conformity scores computed on $\mathcal{S}_2 \cup \{(X,Y)\}$ are uniformly distributed on $\left\{1, 2, \dots, n_2+1 \right\}$ therefore, with probability at least $1-\alpha$, $g(X,Y) \geq q$.

\begin{lemma}
The prediction set $C_n(X)$ for $Y$ of the algorithm described above is marginally valid.
\end{lemma}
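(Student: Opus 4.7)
The plan is to condition on the first subsample $\mathcal{S}_1$ and reduce the claim to the standard exchangeability argument for split-conformal prediction. Conditional on $\mathcal{S}_1$, the trained operator $\hat r_1$ and therefore the conformity score $g_1$ are deterministic. Since the data $(X_i, Y_i)_{i=1}^n$ and the new pair $(X, Y)$ are i.i.d.\ from $P$, the $n_2$ pairs in $\mathcal{S}_2 = \mathcal{S} \setminus \mathcal{S}_1$ together with $(X, Y)$ remain i.i.d.\ from $P$ and independent of $\mathcal{S}_1$. Hence the $n_2 + 1$ scalar scores
\[
G_i = g_1(X_i, Y_i) \text{ for } (X_i, Y_i) \in \mathcal{S}_2 \quad \text{and} \quad G^* = g_1(X, Y)
\]
are exchangeable conditional on $\mathcal{S}_1$.

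Next I would exploit this exchangeability to control the rank of $G^*$. Assuming for the moment that the distribution of $G^*$ is continuous, so that ties occur with probability zero, the rank $R$ of $G^*$ among $\{G_1, \ldots, G_{n_2}, G^*\}$ is uniform on $\{1, \ldots, n_2+1\}$. Writing $k = \lfloor (n_2+1)\alpha \rfloor$, the quantile $q$ computed in step 5 is the $k$-th smallest element of $\{G_1, \ldots, G_{n_2}\}$, and a short case analysis comparing $R$ to $k$ (inserting $G^*$ into the sorted list of $\mathcal{S}_2$-scores) shows that $\{G^* \geq q\}$ coincides with $\{R > k\}$, an event of probability $(n_2 + 1 - k)/(n_2 + 1) \geq 1 - \alpha$. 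Because the definition of $g_1$ combined with the choice $\| y \|_{\mathcal{Y}} = \sup_{\lambda \in L} |y(\lambda)|$ gives $\{Y \in C_n(X)\} = \{G^* \geq q\}$, taking expectations over $\mathcal{S}_1$ yields the marginal bound $\mathbb{P}(Y \in C_n(X)) \geq 1 - \alpha$.

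The main obstacle I anticipate is the proper treatment of ties when $P$ does not produce a continuous score distribution: in that case the uniform-rank statement no longer holds verbatim. The standard remedy is either to break ties with independent auxiliary uniform variables (which preserves exchangeability), or to observe that ties can only enlarge the acceptance set $C_n(X)$ relative to the continuous case, so that the coverage inequality is preserved. I would include a short remark to this effect. Beyond that, nothing in the argument depends on the functional nature of $\mathcal{X}$ or $\mathcal{Y}$: the infinite-dimensional structure enters only through the definition of the score, and the validity proof reduces to the finite-dimensional exchangeability computation above.
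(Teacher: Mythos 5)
Your argument is correct and rests on the same idea as the paper's proof: the paper simply treats each pair $(X_i,Y_i)$ as a single data point, invokes Lemma 2.1 of \cite{lei2013conformal} (whose content is exactly the conditional-on-$\mathcal{S}_1$ exchangeability-of-ranks computation you carry out explicitly, including the tie-handling), and then observes that $C_n(X)$ is the $x$-slice of the resulting joint region $\bar C_n=\{(x,y): g_1(x,y)\geq q\}$. Your version is just the self-contained unpacking of that cited lemma, so no gap.
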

\begin{proof}
After changing $X_1,\dots X_n$ into $(X_1,Y_1), \dots, (X_n,Y_n)$, Lemma 2.1 of \cite{lei2013conformal} guarantees that $\mathbb{P}((X,Y) \in \bar C_n) \geq 1-\alpha$, where $\bar C_n=\{(x,y) \in \mathcal{Y} : g_1(x,y) \geq q\}$. Since $C_n(X)$ is an $x$-slice of $\bar C_n$, the result follows.
\end{proof}

\section{Pointwise confidence bands for the regression operator}
\label{sec:confidencebands}
One may also be interested in constructing a confidence set $A_n$ for the unknown regression operator $r(X)(\cdot)=E(Y|X)(\cdot)$, corresponding to the (conditional) mean of UFC$\alpha$ in $L=[\underline{\lambda},\overline{\lambda}]$. In the context of nonparametric functional regression it is hard to derive exact confidence bands for $r(X)$. Even resampling-based confidence regions can be difficult to justify theoretically.

In this section, we discuss the implementation of a \textit{wild bootstrap} procedure described by \cite{ferraty2012regression} to obtain a confidence band for the projection of $r(X)$ on the subspace spanned by the first $m$ principal components of the observed sample of response functions $(Y_i)_{i=1}^n$.
The resulting set $A_n$ satisfies the property

\begin{equation*}
		P^{(n)} \left( \tilde r^m(X)(\lambda) \in A_n, \forall  \lambda \in L \right) \gtrsim 1-\alpha\,,
\end{equation*}

\noindent for any $P$, for $\alpha \in (0,1)$ and for a fixed $x \in \mathcal{X}$, where $\tilde r^m(x)$ is the projection of $r(x)$ onto the subspace spanned by the first $m$ principal components of $(Y_i)_{i=1}^n$. The symbol $\gtrsim$ indicates that the inequality only holds  for large enough $n$.

The wild bootstrap is a bootstrap procedure that uses a perturbed version of the estimated model residuals $e^*_{i,b}$ to generate a bootstrap sample of response functions $Y_{i,b}^*$ according to

\begin{equation*}
	Y_{i,b}^* = \hat r(X_i) + e^*_{i,b}
\end{equation*}

\noindent in order to obtain a bootstrap sample of spectra $(X_i,Y^*_{i,b})_{i=1}^n$. The noisy bootstrap residuals are generated so that they have mean 0 and have the same variance as the original residuals. This way, the bootstrap response functions $Y_{i,b}^*$'s have mean $\hat r(X_i)$ and the same variance of the $Y_i$'s under the probability induced by the resampling scheme. The wild bootstrap procedure developed by \cite{ferraty2012regression} works as follows.\\

\fbox{\parbox{\textwidth}{
\begin{center}
	\textbf{Functional wild bootstrap}\\
\end{center}

\noindent \textbf{Input:} sample $\mathcal{S}=\{(X_1,Y_1), \dots, (X_n,Y_n)\}$; first $m$ principal components of $(Y_i)_{i=1}^n$, $(\hat \phi_j)_{j=1}^m$; evaluation point $x$; confidence level $1-\alpha$; number of bootstrap replicates $B$\\
\textbf{Output:} $A_n$, a level $1-\alpha$ confidence band for $\tilde r^m(x)$

\begin{enumerate}
	\item estimate $\hat r(x)$ on $\mathcal{S}$ (i.e. choose the optimal bandwidth $h$)
	\item define $e_i = Y_i - \hat r(X_i)$ for $i \in \{1, \dots, n\}$
	\item for $b \in \{1, \dots, B\}$ repeat
		\begin{itemize}
			\item sample with replacement $n$ functions $(e_{i,b})_{i=1}^n$ from $(e_i)_{i=1}^n$
			\item sample $n$ independent draws $(V_i)_{i=1}^n$ such that $E(V_i)=0$ and $E(V_i^2)=1$ for $ i \in \{1, \dots, n\}$
			\item define $e_{i,b}^*=e_iV_i$ for $i \in \{1, \dots, n\}$
			\item define $Y^*_{i,b} = \hat r(X_i) + e^*_{i,b}$
			\item compute $\hat r_{b}(x)=\frac{\sum_{i=1}^n K\left(\frac{d(X_i,x)}{h}\right) \, Y^*_{i,b}(\lambda)}{\sum_{i=1}^n K\left(\frac{d(X_i,x)}{h}\right)}$
			\item project $\hat r_{b}(x)$ onto $(\hat \phi_j)_{j=1}^m$ and get $\hat r_{b,j}(x)=\langle \hat r_{b}(x), \hat \phi_j \rangle$ for $j \in \{1, \dots, m\}$
		\end{itemize}
	\item for each $j \in \{1, \dots, m\}$ compute the $\alpha/(2m)$ and the $1-\alpha/(2m)$ quantiles of the bootstrap distribution of $\hat r_{b,j}(x)$, say $q_{j,\alpha/(2m)}$ and $q_{j,1-\alpha/(2m)}$ 
	\item $A_n=\{\sum_{j=1}^m a_j \hat \phi_j : q_{j,\alpha/(2m)} \leq a_j \leq q_{j,1-\alpha/(2m)} \quad \forall j \in \{1,\dots, m\} \}$
\end{enumerate}
}}
\text{}\\

\noindent In the implementation of the wild bootstrap we set $V_i \sim 0.1(5+ \sqrt{5}) \delta_{(1-\sqrt{5})/2} + 0.1(5- \sqrt{5}) \delta_{(1+\sqrt{5})/2}$ which guarantees that $E(V_i)=0$ and $E(V_i^2)=E(V_i^3)=1$ and $\langle \hat r_{b}(x), \hat \phi_j \rangle = \int_{L} \hat r_{b}(x)(\lambda) \cdot \hat \phi_j(\lambda)\, d\lambda$ (which is in practice approximated by computing a Riemann sum on a finite grid of $\lambda$'s).

\section{Simulation study}
\label{sec:simulationstudy}
We fit the model described in Section \ref{sec:nonparametricfunctionalmodeling} on a set of $n=100$ mock quasar spectra and perform the prediction of UFC$\alpha$ in the 1050-1185 \AA\, wavelength range of a set of another set of 100 mock spectra that are generated according to the same recipe, but are not used at any time for model fitting.
The mock spectra are simulated using the $N=10$ eigenspectra $\xi_j$ and the mean flux component $\mu$ provided by \cite{suzuki2005continuum} (which were derived from a set of real low redshift quasar spectra normalized at the $\approx1280$ \AA\, wavelength) according to

\begin{equation}
	\label{eq:mockspectra}
		f_{\text{mock}}(\lambda)=\mu(\lambda)+\sum_{j=1}^{N} \omega_j \xi_j(\lambda) + \eta(\lambda)\,,
\end{equation}

\noindent where the $\omega_j$'s are independent $\mathcal{N}(0,\lambda_j)$ draws and $\lambda_j$ is the eigenvalue associated to the $j$-th eigenspectra, $\xi_j$. Finally, $\eta(\lambda) \sim \mathcal{N}(0,\sigma^2(\lambda))$ where the flux standard deviation function $\sigma$ is provided by \cite{suzuki2005continuum}. We do not simulate absorption caused by neutral hydrogen in the mock spectra as this is unnecessary for the evaluation of the predictions of the nonparametric regression model: in fact, since we are trying to predict UFC$\alpha$, the model predictions must be evaluated through a comparison with the true simulated UFC$\alpha$.

The noisy mock spectra $f_{\text{mock}}$ are smoothed using quadratic local polynomials on the two wavelength ranges of interest in order to derive the smooth predictor and response functions. The local polynomial smoothing is performed using the R function \verb+loess+ and the optimal smoothing parameter is chosen using 2-fold cross-validation. In equation \eqref{eq:model}, we found that setting the semimetric $d$ to be the metric induced by the $L_2$ norm was preferable over the Sobolev semimetrics based on the first, the second or higher order derivatives of the predictor functions, as the latter appeared to be less sensitive to amplitude and slope variability among the mock spectra and produced less accurate predictions. The regression operator $r$ is estimated using the \verb+ffunopare.knn.gcv+ function of the R code accompanying \cite{ferraty2012regression}.

The top panels of Figure~\ref{fig:mockresults} depict two successful predictions of the flux continuum on the mock spectra together with marginally valid conformal prediction bands. The bottom panels depict the projection of $\hat r(X)$ onto the first 5 principal components computed on the Lyman-$\alpha$ portion of the spectrum of the sample of 100 mock spectra, along with wild bootstrap confidence bands computed as described in Section \ref{sec:confidencebands}. In the two top panels, we observe that the shape of the continuum in the 1050-1185 \AA\, range is recovered with high accuracy. We point out, however, that for some mock spectra the amplitude of the flux continuum is not accurately predicted (see Figure~\ref{fig:mockspectraamplitude}). This is observed through a visual comparison of the predicted flux continuum and the true flux continuum of the mock spectra, and reflected in the prediction accuracy measure of Figure~\ref{fig:mockpredictionerror}. The model does not entirely capture the variability of the flux continuum amplitude in the Lyman-$\alpha$ forest portion of the spectrum, although this limitation is not unique to the functional modeling approach: other authors (see for instance \citealp{lee2012mean} and \citealp{paris2011principal}) encountered the same problem on real data when performing the prediction using different methodologies, and a possible solution consists in employing some ad hoc adjustment or post-processing of the model predictions in order to make them consistent with external measurements of the mean flux in the forest (see for instance the mean flux regulation procedure proposed by \citealp{lee2012mean}). This kind of adjustments or mean flux regulations, which we do not attempt in this work, often appear to positively influence the accuracy of the predictions, but they come at the cost of introducing a somewhat unpleasant post-processing step in the prediction procedure. It is not surprising that this difficulty is encountered on our simulated spectra, as their eigenspectra were in fact derived from real data. \cite{desjacques2007distribution}, \cite{lee2012mean}, \cite{paris2011principal}, \cite{zheng1997composite}, and others suggest that the amplitude variability in the Lyman-$\alpha$ forest portion of the spectra might be related to a break in the underlying power law of the UFC which occurs at $\lambda \approx 1200-1300$ \AA.

Notice that the width of the confidence bands in Figures \ref{fig:mockresults} and \ref{fig:mockspectraamplitude} is reasonably small, suggesting that most of the uncertainty in the prediction of the amplitude of UFC$\alpha$ comes from the intrinsic variability of the data rather than from the uncertainty in the estimation of the regression operator $r$.

\begin{figure}[h!]
	\begin{center}
		\includegraphics[width=\columnwidth]{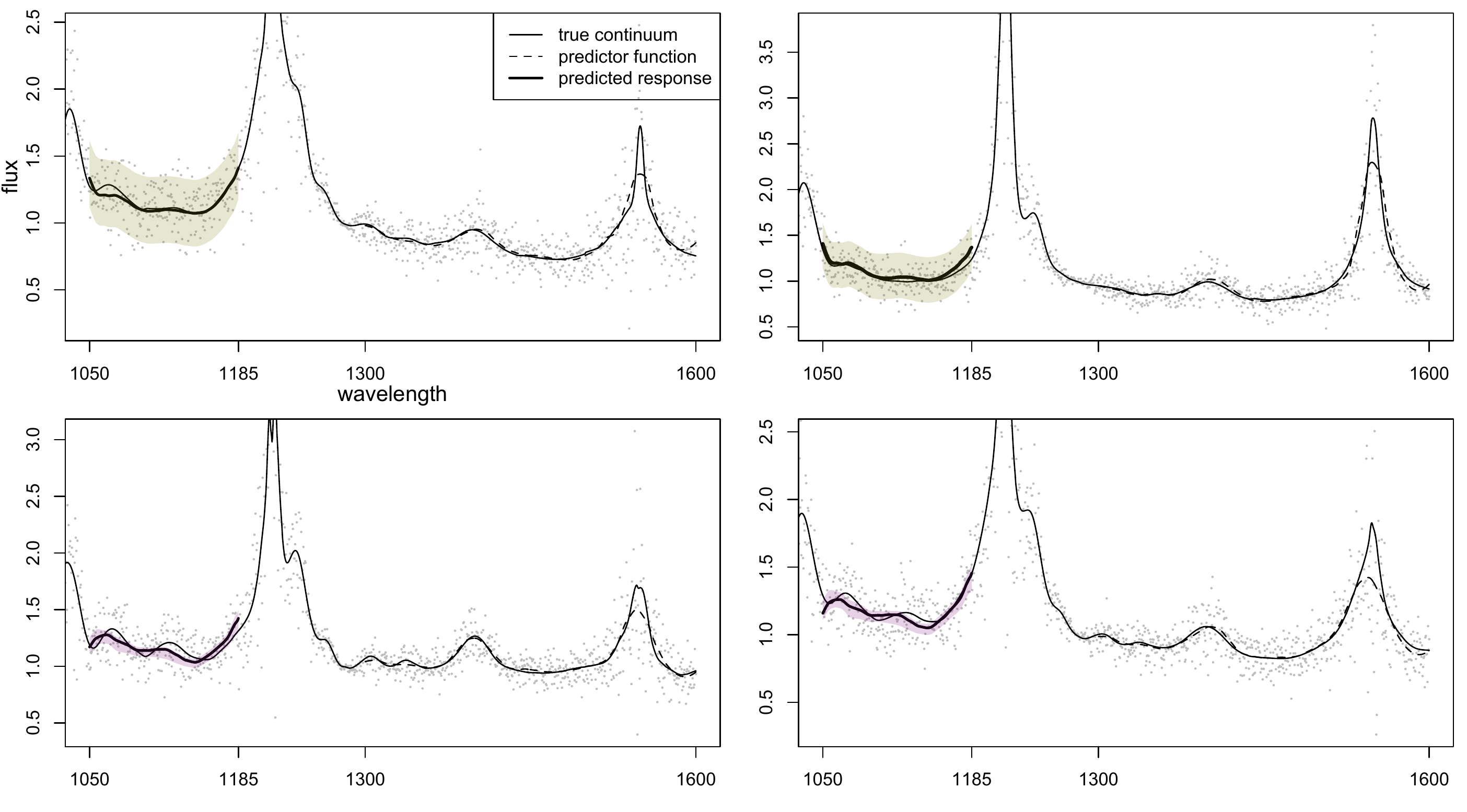}
		\caption{\underline{Top panels:} successful predictions of the Lyman-$\alpha$ flux continuum on mock spectra using the nonparametric functional regression model and the prediction equation \eqref{eq:model}. The thin continuous line is the true mock continuum (which we pretend to be unknown), the dashed line on the right is the locally quadratic functional representation of the true unabsorbed continuum derived from the noisy mock spectrum and the bold continuous line on the left is the predicted unabsorbed flux continuum. The gray bands are 90\% marginally valid conformal prediction bands. \underline{Bottom panels:} projection of $\hat r(X)$ onto the first $m=5$ principal components ($\approx 98\%$ of total variance) of the smooth response functions of the mock spectra $(Y_i)_{i=1}^{100}$. The pink bands are 90\% wild bootstrap confidence bands for the projection of the unknown regression operator $r(X)$ ($B=500$).}
		\label{fig:mockresults}
	\end{center}
\end{figure}

\begin{figure}[h!]
	\begin{center}
		\includegraphics[width=\columnwidth]{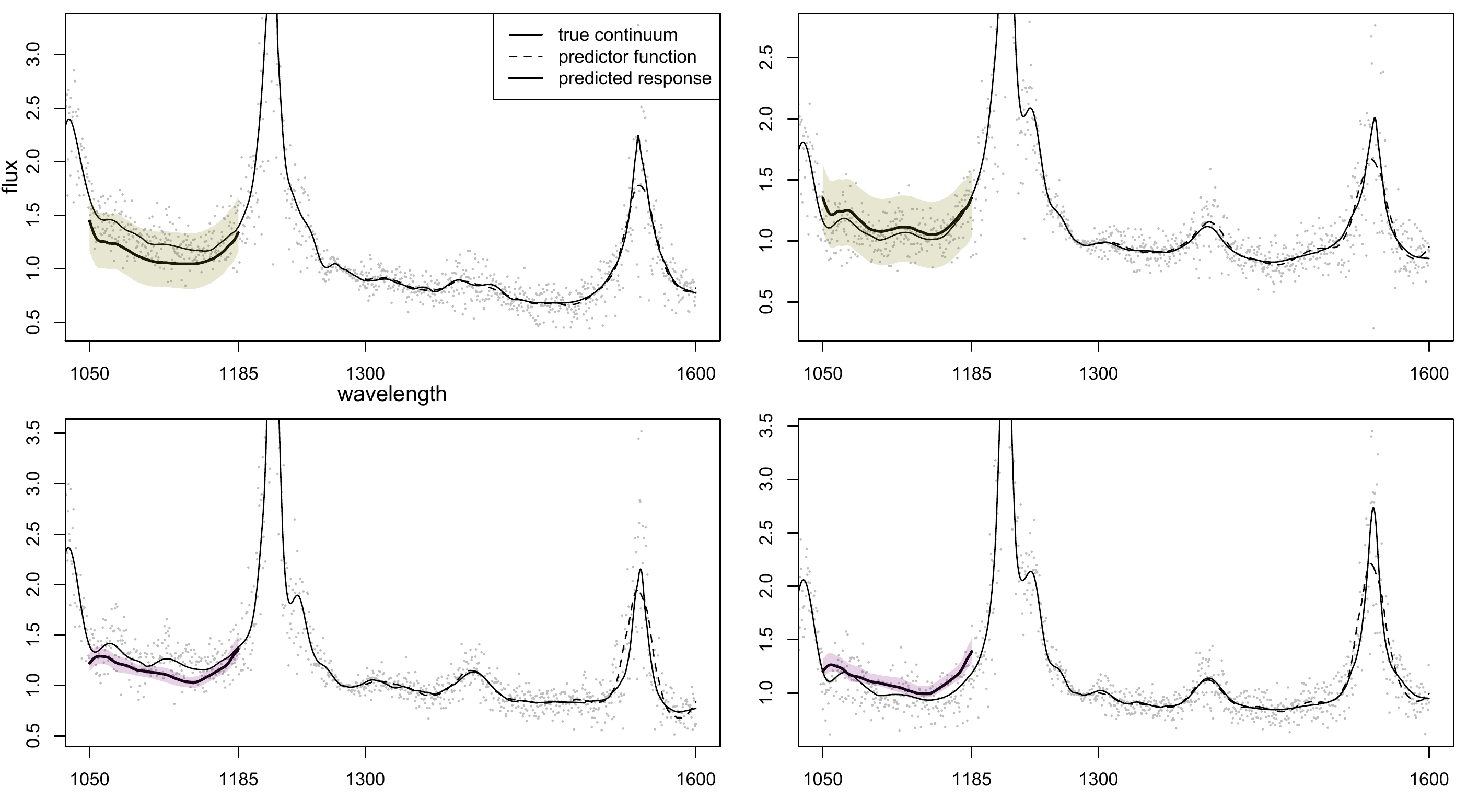}
		\caption{\underline{Top panels:} predictions of the Lyman-$\alpha$ flux continuum on mock spectra using the nonparametric functional regression model and the prediction equation \eqref{eq:model}. For these spectra, the amplitude of the flux continuum in the Lyman-$\alpha$ forest is not accurately predicted, although the overall shape of the predicted continuum matches the shape of the true continuum (which we pretend to be unknown). The thin continuous line is the true mock continuum, the dashed line on the right is the locally quadratic functional representation of the true continuum derived from the noisy mock spectrum and the bold continuous line on the left is the predicted flux continuum. The gray bands are 90\% marginally valid conformal prediction bands. \underline{Bottom panels:} projection of $\hat r(X)$ onto the first $m=5$ principal components ($\approx 98\%$ of total variance) of the smooth response functions of the mock spectra $(Y_i)_{i=1}^{100}$. The pink bands are 90\% wild bootstrap confidence bands for the projection of the unknown regression operator $r(X)$ onto the first $m=5$ principal components ($B=500$).}
		\label{fig:mockspectraamplitude}
	\end{center}
\end{figure}

In order to measure the quality of the predictions, we look at the relative prediction error on the mock spectra at each wavelength $\lambda$ in the 1050-1185 \AA\, portion of the spectrum

\begin{equation*}
	e(\lambda) = \frac{\left| \hat Y(\lambda) - f_{\text{mock}}^{\text{cont.}}(\lambda) \right|}{f_{\text{mock}}^{\text{cont.}}(\lambda)}\,,
\end{equation*}

\noindent where the true mock continua $f_{\text{mock}}^{\text{cont.}}$ are known by construction and correspond to the $\mu(\lambda)+\sum_{j=1}^{N} \omega_j \xi_j(\lambda)$ summand of equation \eqref{eq:mockspectra}. Figure~\ref{fig:mockpredictionerror} summarizes the relative prediction error on the mock spectra. We also look at the plain prediction error

\begin{equation*}
	u(\lambda) = \hat Y(\lambda) - f_{\text{mock}}^{\text{cont.}}(\lambda)
\end{equation*}

\noindent whose empirical counterpart on the 100 mock spectra is summarized in Figure \ref{fig:mockerror}. The empirical average of the relative prediction error over the $[1050,1185]$ \AA\, wavelength range is approximately 5.7\% for the 100 mock spectra for which we performed the prediction. The first and the third quantiles, averaged over the target wavelength range, are respectively 2.6\% and 8.1\% approximately. Ideally, one wishes that both $e$ and $u$ are close to 0 for each spectrum. However, even for moderately large $e$'s, if $u$ is close to 0 on an aggregate sample of spectra (i.e. the prediction errors $u$'s zero out on average on an aggregate sample of spectra), one is still able to
detect BAOs signals in the correlation function of $\delta$. From Figure \ref{fig:mockerror} it appears that $u$ is very close to 0 on average on the 100 mock spectra.

\begin{figure}[h]
	\begin{center}
		\includegraphics[width=\columnwidth]{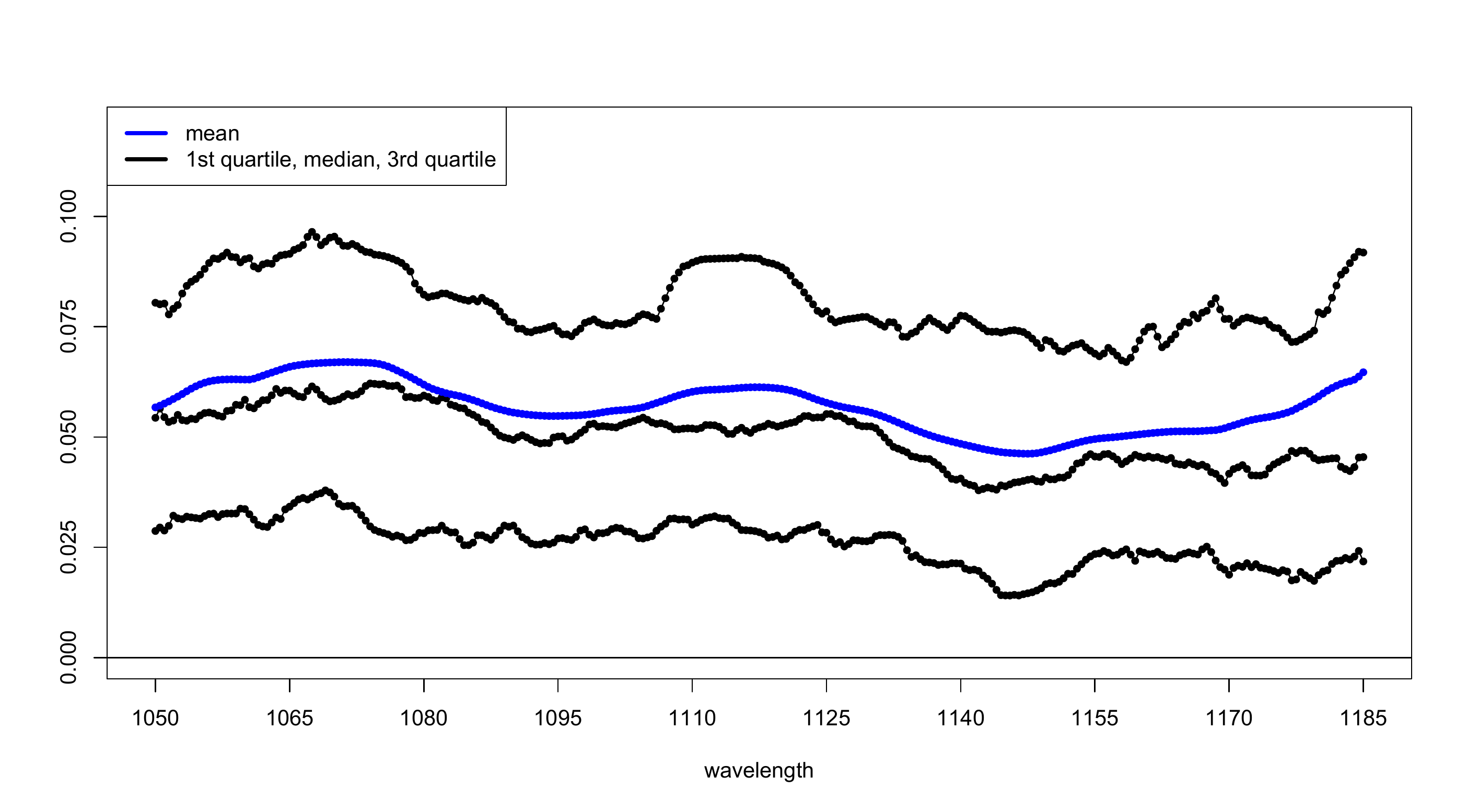}
		\caption{Mean, median, 1st quartile and 3rd quartile of the empirical relative prediction error $e$ computed from the predictions on the 100 mock spectra. The relative prediction error is approximately equal to 5.7\% on average over the target wavelength range.}
		\label{fig:mockpredictionerror}
	\end{center}
\end{figure}

\begin{figure}[h]
	\begin{center}
		\includegraphics[width=\columnwidth]{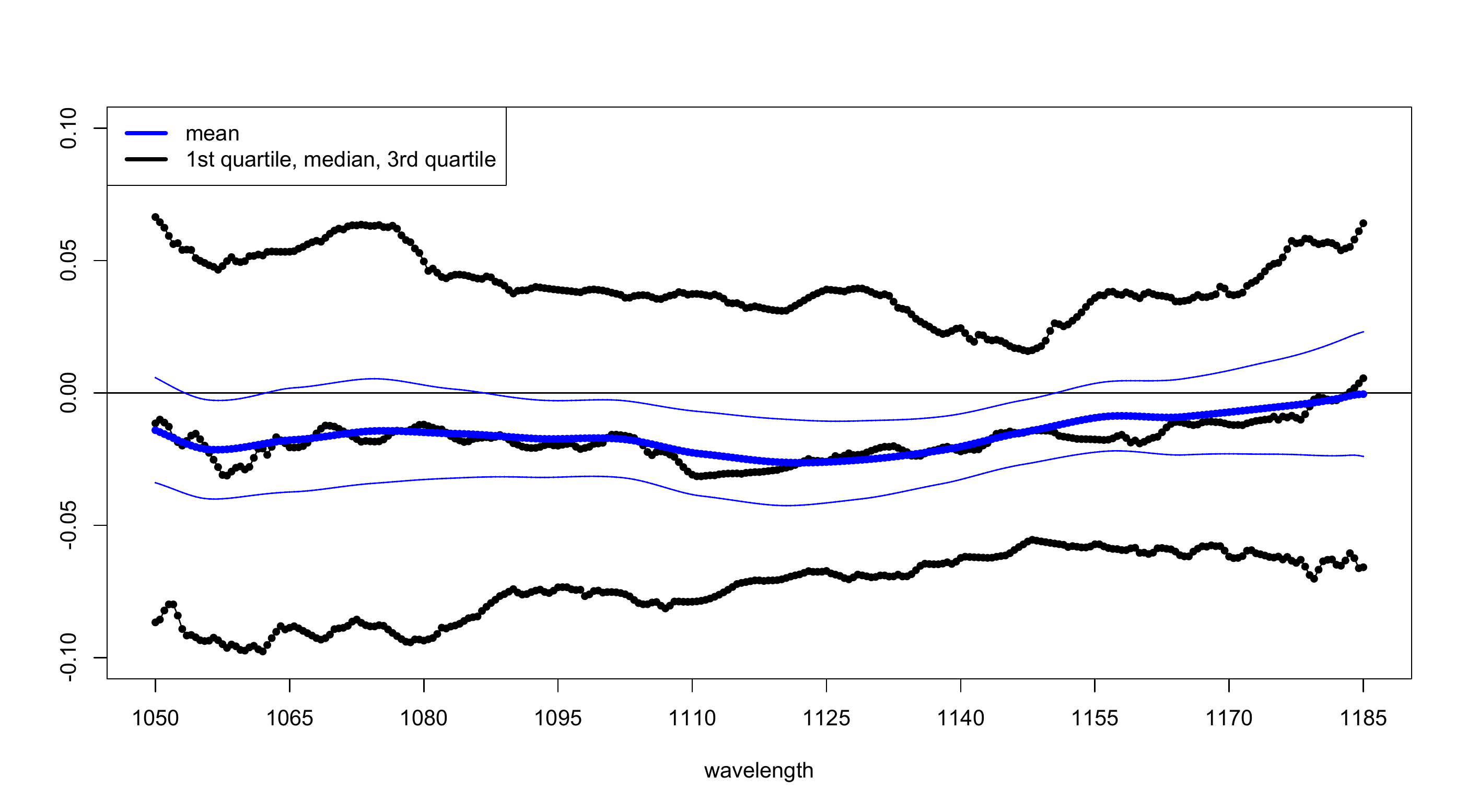}
		\caption{Mean, median, 1st quartile and 3rd quartile of the empirical prediction error $u$ computed from the predictions on the 100 mock spectra. The thin blue lines represent approximate 95\% pointwise confidence intervals for the mean of $u$.}
		\label{fig:mockerror}
	\end{center}
\end{figure}

\section{Application to HST-FOS and BOSS spectra}
\label{sec:realspectraapplication}
In this section we present the prediction of UFC$\alpha$ on real data. More precisely, we perform the following two separate analyses:

\begin{enumerate}
	\item fit the model using 50 of 67 spectra with redshift $z \approx 1$ from the Hubble Space Telescope Faint Object Spectrograph catalog (HST-FOS) and predict UFC$\alpha$ on the remaining subset of 17 HST-FOS spectra
	\item fit the model on the entire set of 67 low redshift HST-FOS spectra and use the model to predict UFC$\alpha$ on a set of 10 spectra with redshift $ z \approx 3$ from the BOSS Data Release 9 catalog.
\end{enumerate}

The HST-FOS spectra are low redshift spectra with no or little flux absorption in the Lyman-$\alpha$ forest portion of the spectrum, whereas absorption is clearly present in the Lyman-$\alpha$ forest of the BOSS spectra (see Figure~\ref{fig:BOSSHSTresults}).  Similarly to the mock spectra, all the observed real spectra are smoothed using local quadratic polynomials and normalized at the $\approx 1300$ \AA\, wavelength.

Based on the scree plot of Figure~\ref{fig:pcaHST}, we decide to keep five principal components ($\approx 95\%$ of variance on HST-FOS smooth Lyman-$\alpha$ forest unabsorbed continua) when running the wild bootstrap. Thus, the bands in the bottom panels of Figure~\ref{fig:BOSSHSTresults} are uniform (with respect to wavelength) approximate confidence bands for $\tilde r^5(x)$, the projection of the unknown regression operator on the first five random basis functions associated to 67 the HST-FOS response functions.

\begin{figure}
	\begin{center}
		\includegraphics[width=\columnwidth]{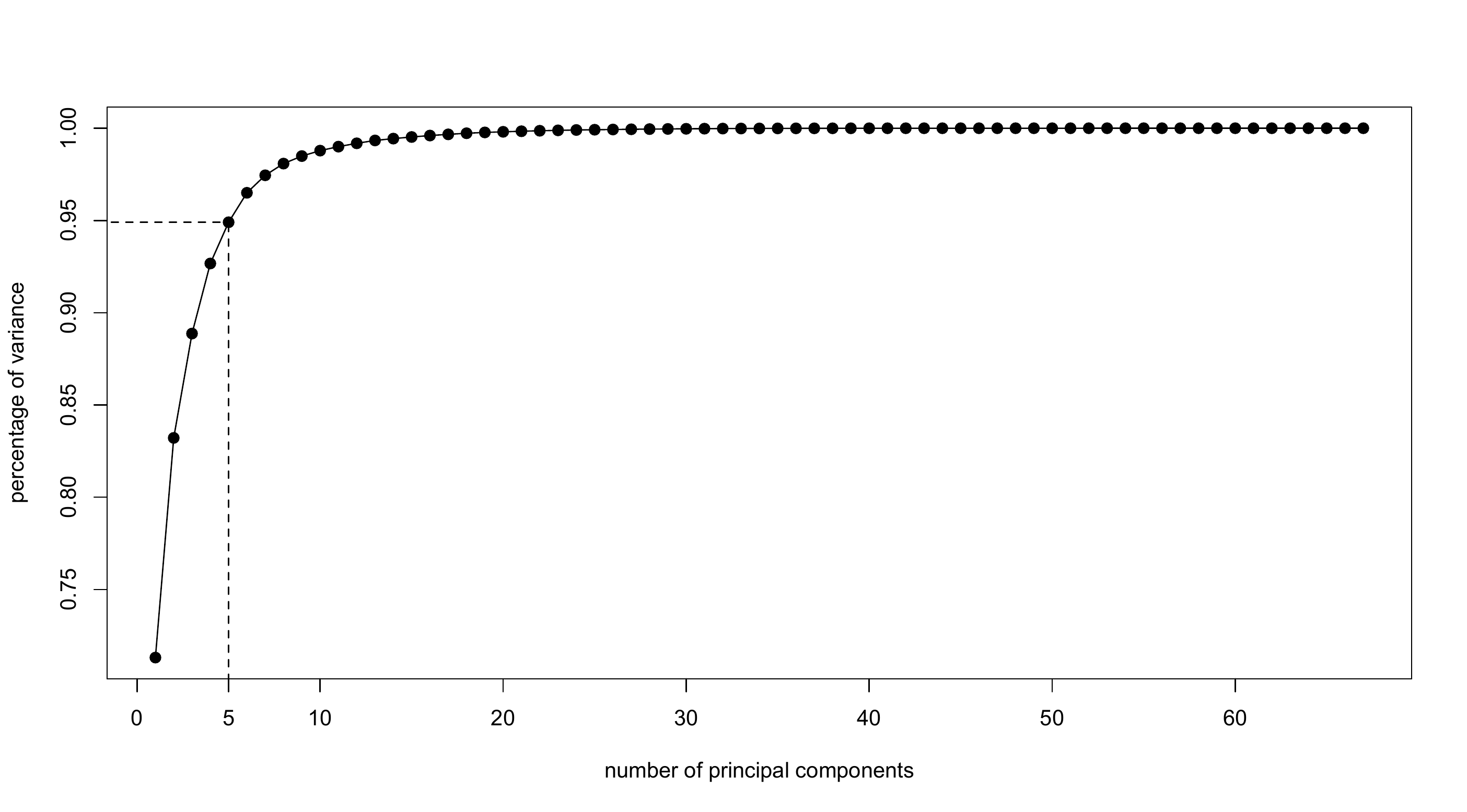}
		\caption{Scree plot of the principal component analysis on the response functions of the 67 HST-FOS spectra. 5 principal components account for $\approx 95\%$ of the variability in the smooth representation of the Lyman-$\alpha$ portion of the HST-FOS spectra.}
		\label{fig:pcaHST}
	\end{center}
\end{figure}

Figure~\ref{fig:HSTHSTresults} displays 4 of the 17 predictions on the HST-FOS spectra when the model is fitted on 50 HST-FOS spectra. Analogously to the results of the previous analysis on the mock spectra, the shape of the flux continuum appears to be correctly predicted: although for real spectra the true UFC$\alpha$ is unknown, the four panels of Figure~\ref{fig:HSTHSTresults} suggest that the predicted continuum follows the same trend of the observed pixels. We point out that, as it was the case with the mock spectra, some of the predicted continua are inaccurate in terms of predicted amplitude (see bottom panels of Figure~\ref{fig:HSTHSTresults}).

\begin{figure}[h]
	\begin{center}
		\includegraphics[width=\columnwidth]{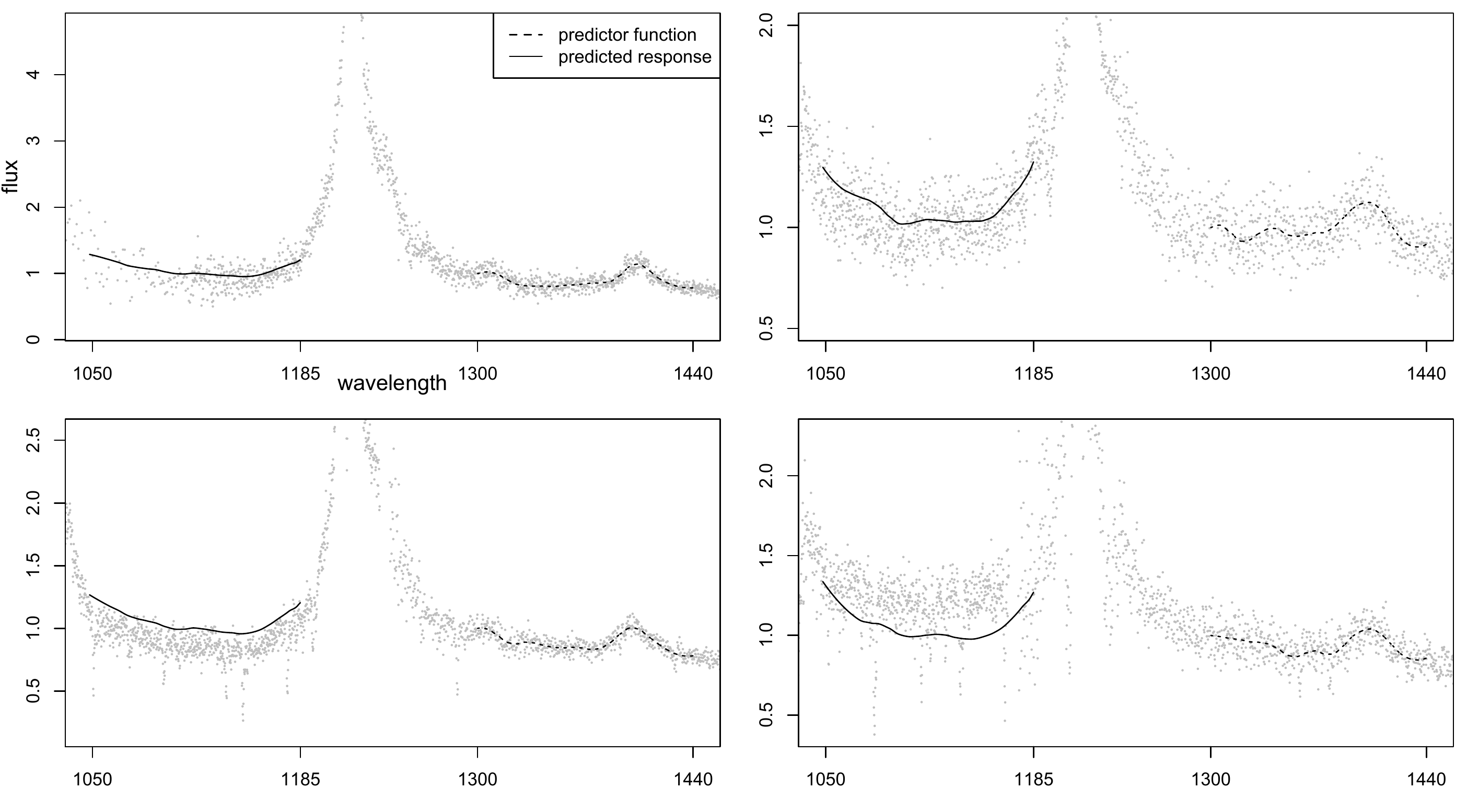}
		\caption{4 of the 17 predictions on HST-FOS spectra when the nonparametric model is fitted on the remaining 50 HST-FOS spectra. \underline{Top panels:} the shape and the amplitude of the flux continuum appears to be correctly predicted. \underline{Bottom panels:} the shape of the continuum is presumably correctly predicted, but the overall amplitude of the underlying flux continuum is not accurately predicted. Notice the presence of absorption in the Lyman-$\alpha$ forest portion of the raw spectra (dotted) in the two bottom panels.}
		\label{fig:HSTHSTresults}
	\end{center}
\end{figure}

Figure~\ref{fig:BOSSHSTresults} displays 4 of the 10 predictions on the BOSS spectra when the model is fitted on the entire set of 67 HST-FOS spectra, together with conformal prediction bands and confidence bands which are derived as described in Sections \ref{sec:conformalsets} and \ref{sec:confidencebands}. The model seems to perform satisfactorily on the 10 BOSS spectra: both the shape and the amplitude of the flux continuum appear to be correctly predicted on the basis of a visual inspection of the plots.

In the context of the BOSS survey, currently each spectrum in the catalog is associated to two PCA-based predictions of the UFC. The prediction that is judged best according to a criterion which includes visual inspection is reported as the state of the art prediction for that given spectrum. The predictions that one would obtain by applying the nonparametric functional regression model on BOSS spectra would add a third prediction for each spectrum and may lead to improvements to the quality of the continua in the BOSS catalog. This should be verified in future work. As far as one is concerned with BAOs detection, a direct byproduct of the improvement in the quality of the continua is a higher statistical power in the detection of BAOs and an increased accuracy in their measurement.

\begin{figure}[h]
	\begin{center}
		\includegraphics[width=\columnwidth]{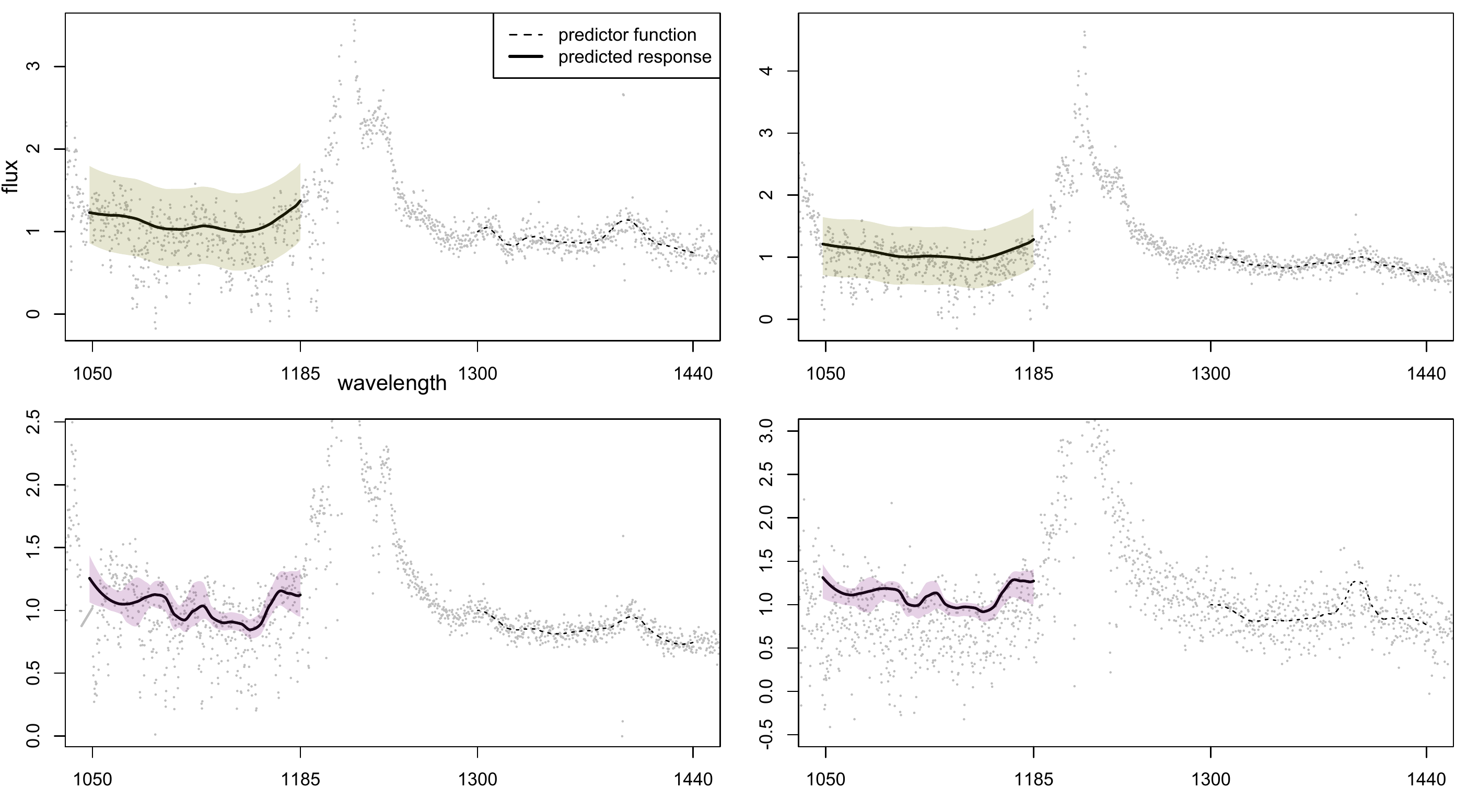}
		\caption{\underline{Top panels:} 2 of the 10 predictions on BOSS quasar spectra with redshift $\approx 3$ when the nonparametric model is fitted on the 67 available HST-FOS quasar spectra with redshift $ z \approx 1$. The shape and the amplitude of the flux continuum appear compatible with the observed flux pixels on the basis of a visual inspection. The gray bands are 90\% marginally valid conformal prediction bands for the true continuum. \underline{Bottom panels:} projection of $\hat r(X)$ onto the first 5 principal components computed on the Lyman-$\alpha$ portion of the HST-FOS spectra. The pink bands in the bottom panels are 90\% wild bootstrap confidence bands for the projection of the unknown regression operator $r(X)$ onto the first $m=5$ principal components of the smooth response functions of the HST-FOS spectra $(Y_i)_{i=1}^{67}$ ($B=500$). Notice that the Lyman-$\alpha$ forest is present and clearly observable in these redshift $z \approx 3$ BOSS spectra.}
		\label{fig:BOSSHSTresults}
	\end{center}
\end{figure}

\section{Conclusions and future work}
\label{sec:conclusions}
In this paper, we demonstrate how the nonparametric functional regression model for function-valued predictor and function-valued response and the estimator of equation \eqref{eq:model} can be used to predict the unknown flux continuum in the Lyman-$\alpha$ forest UFC$\alpha$ from the flux continuum UFC+ in the absorption-free portion of the spectrum using a set of realistic mock spectra and a set of real spectra from the HST-FOS and from the BOSS catalogs. Our results suggest that the methodology that we describe can expand and complement the toolkit of methods that are used to predict the unknown UFC of spectra of large scale studies, which is an important challenge in today cosmology. We hope this study will draw the attention of other astronomers, astrostatisticians and scientists working with Lyman-$\alpha$ forest data to the potential of functional modeling. Furthermore, we introduce a methodology to construct prediction bands with finite sample coverage guarantees and no assumption on the distribution of the function-valued pairs $(X_i,Y_i)$'s for the nonparametric functional regression model proposed by \cite{ferraty2012regression}. It is worth pointing out that current estimates of the correlation function of the relative flux absorption $\delta$ of equation \eqref{eq:relativefluxabsorption} use weighting schemes of the spectra that generally do not account for the uncertainty in the prediction of UFC$\alpha$. The conformal prediction bands proposed in this paper offer a starting point to improve the weighing of the spectra in the computation of the correlation function by including some measure of uncertainty (e.g. the width of the bands) along with the other sources of uncertainty that are already taken into consideration.

Our study raises additional questions, both applied and theoretical. On the applied side, 
it would be interesting to examine if and how strongly the mean flux regulation post-processing step proposed by \cite{lee2012mean} would positively affect the accuracy of the predictions obtained using the nonparametric functional regression model that we describe (especially on BOSS spectra). From a methodological perspective, efforts should be made to devise an extended model that incorporates information about the break in the power law of the spectrum, which we mention in Section \ref{sec:simulationstudy}. Furthermore, information contained in the observed pixels of the Lyman-$\alpha$ forest about the amplitude of UFC$\alpha$ and its shape should be incorporated in the model in such a way to mitigate the inaccuracy in the predicted amplitude of the continuum and to reduce the potential for the slight bias in the predictions which we observed in the simulation study of Section \ref{sec:simulationstudy} (perhaps thus permitting to get around the need for post-processing). Finally, we will investigate refinements of both the conformal prediction bands and of the confidence bands in a future paper.

\bibliography{MyBibliography.bib}

\end{document}